\def\BibTeX{{\rm B\kern-.05em{\sc i\kern-.025em b}\kern-.08em
    T\kern-.1667em\lower.7ex\hbox{E}\kern-.125emX}}
\newcommand{\N}{\mathbb{N}}
\newcommand{\HD}{\mathcal{HD}}
\newcommand{\bigO}{\mathcal{O}}
\newcommand{\myFor}[2]{\textbf{for}~#1~\textbf{do}~#2}
\newcommand{\SV}{{\cal S}}
\newcommand{\CP}{{\cal CP}}
\newcommand{\DB}{{\sf DB}}
\newcommand{\C}{{\cal C}}
\newcommand{\HDbound}{\overline{\HD}}
\newcommand{\system}{{\em LightPIR\xspace}}
\newtheorem{lemma}{Lemma}
\title{\emph{LightPIR}: Privacy-Preserving Route Discovery\\for Payment Channel Networks
}
\author{
Krzysztof Pietrzak$^\ddag$\footnote{This project has received funding from the European Research Council (ERC) under the European Union’s Horizon 2020 research and innovation programme
(682815 - TOCNeT)}\\
\texttt{krzysztof.pietrzak@ist.ac.at}
\and
Iosif Salem$^\S$\\
\texttt{iosif.salem@univie.ac.at}
\and
Stefan Schmid$^\S$\footnote{This project has received funding from the European Research Council (ERC) under the European Union’s Horizon 2020 research and innovation programme
(864228 - AdjustNet)}\\
\texttt{stefan\_schmid@univie.ac.at}
\and
Michelle Yeo$^\ddag$\\
\texttt{michelle.yeo@ist.ac.at}
\and
$^\ddag$IST Austria\\
$^\S$Faculty of Computer Science, 
University of Vienna
}
\date{}
\begin{document}

\maketitle

\begin{abstract}
Payment channel networks are a promising approach to
improve the scalability of cryptocurrencies:
they allow to perform transactions 
in a peer-to-peer fashion, along multi-hop 
routes in the network, without requiring consensus on the
blockchain. 
However, during the discovery of cost-efficient routes for the transaction,
critical information may be revealed about the transacting entities.

This paper initiates the study of privacy-preserving route discovery
mechanisms for payment channel networks. In particular, we present
\system, an approach which allows a source to efficiently discover a shortest path to its
destination without revealing any information about the endpoints of the transaction.
The two main observations which allow for an efficient solution
in \system~are that: (1) surprisingly, hub labelling algorithms -- which were developed to preprocess ``street network like" graphs so one can later efficiently compute shortest paths -- also work well for the graphs underlying payment channel networks, and that (2) hub labelling algorithms can be directly combined with private information retrieval.

\system \space relies on a simple hub labeling heuristic on top of existing hub labeling algorithms which leverages the specific topological features of cryptocurrency networks to further minimize storage and bandwidth overheads.
In a case study considering the Lightning network, 
we show that our approach is an order of magnitude more efficient compared
to a privacy-preserving baseline based on using private information retrieval on a database that stores all pairs shortest paths. 
\end{abstract}


\section{Introduction}

As the popularity of cryptocurrencies is growing explosively,
the inefficiency of the underlying consensus protocol increasingly becomes
a bottleneck, preventing fast payments at scale. 
A promising solution to this scalability challenge 
are emerging payment channel networks~\cite{gudgeon2019sok,csur21crypto} such as 
the Lightning network~\cite{poon2016bitcoin}, which allow to perform transactions off-chain,
without requiring consensus on the
blockchain. In a nutshell, a payment channel is a cryptocurrency transaction which escrows or dedicates money on the
blockchain for exchange with a given user and duration. 
Users
can also interact if they do not share a direct payment channel:
they can route transactions through intermediaries.
To incentivize nodes (the
intermediaries) to contribute to the transaction routing, 
payment channel networks typically allow 
intermediaries to charge nodes which route through 
them a nominal fee. This is also the
approach taken in the Lightning network which serves us as
a case study in this paper.

Payment channel networks however also introduce new challenges.
In particular, since different routes come at different fees, nodes require
scalable mechanisms to discover ``short” (i.e., low-cost) routes.
To ensure scalability and support lightweight nodes (e.g., wallets) in the network, 
these route discovery mechanisms must be efficient, both in terms
of disk space and control traffic. Especially since payment channel
networks can be more dynamic compared to classic communication networks
(e.g., channels and liquidities can change over time)~\cite{gudgeon2019sok},
it is crucial that resource-constrained
nodes do not have to store and maintain the entire network topology
to be able to compute a route toward the
transaction’s target.
Thus, over the last years, several innovative approaches 
for a more scalable routing in payment channel networks 
have been proposed~\cite{prihodko2016flare,roos2017settling,trampoline,
malavolta2017silentwhispers,sivaraman2018routing,aft20}.
For example, Lightning introduced the notion of trampoline nodes~\cite{trampoline}
to provide route discovery services to light-weight clients.
The trampoline nodes are more powerful nodes which know the entire topology (e.g., using
gossiping); a wallet then just needs to know how to reach the route server nodes
in its neighborhood and can then request the desired route.

Besides scalability, payment channel networks also need to
provide a high degree of privacy: it is critical that other nodes in the payment channel
networks cannot learn who transacts with whom.  In order to 
provide anonymity in the transaction routing, 
payment channel networks typically 
implement onion routing~\cite{dingledine2004tor,roos2017settling,
malavolta2017silentwhispers}.
However, while onion routing provides security for the routing
process, it does not provide security for the route \emph{discovery} process.
In particular, by requesting shortest paths to certain destinations 
from the route discovery servers (e.g., the trampoline nodes in Lightning),
a node may reveal critical information about its planned financial transactions
\cite{blockchain20,icissp20}.

Motivated by this shortcoming of state-of-the-art payment channel networks,
this paper initiates the study of the design of scalable privacy-preserving 
route discovery mechanisms for payment channel networks.

The main contribution of this paper is the first scalable
and privacy-preserving route discovery mechanism 
for payment channel networks, \system\footnote{The name stands for private information retrieval for Lightning-like networks.}.
\system \space provides information theoretic security guarantees,
and is evaluated both analytically as well as empirically, using
the Lightning network.
We show that our approach is efficient and practical,
and may hence be useful also in other networks where the security of
route discovery is critical.
Concretely, we tested our approach with different snapshots of the Lightning network taken over a period of two years and our heuristic produces a small hub database (about 2MB), which at the moment is an improvement by an order of magnitude compared to the size of the complete network. With the growth of the network, this discrepancy will change further in favor of approaches similar to ours.

As a contribution to the research community and to ensure reproducibility, we will release the code of our implementation together with this paper.

The remainder of this paper is organized as follows. In Section~\ref{sec:probstatement} we formally define the problem we consider. Section~\ref{sec:approaches} presents several alternative solutions to our problem and highlights the potential drawbacks of each solution. We then proceed to state our main contributions in Section~\ref{sec:solution} which is an efficient solution that combines PIR and hub-labelling to solve our problem; our solution also relies on a heuristic which optimises the hub set sizes, to further improve efficiency. Section~\ref{sec:eval} contains an evaluation of our method on various snapshots of the Lightning network over a two year period. In Section~\ref{sec:relwork} we review related work, and finally we discuss future research directions in Section~\ref{sec:conclusion}.

\section{Problem Statement}\label{sec:probstatement}

We model a payment network by a public directed graph $G=(V,E)$, where each edge $e\in E$ has some cost $c(e)\ge 0$. The nodes are the clients and an edge $e=(u,v)$ means there's a channel available from $u$ to $v$ with transaction cost $c(e)$. 
We reserve the letters $n=|V|$ to denote the number of vertices, $m=|E|$ to denote the number of edges, $\delta=|E|/|V|$ to denote the average degree and $d$ for the edge length of the longest shortest path in $G$ (if all edge costs are identical this is just the diameter). We assume the nodes are labelled by $\lambda$-bit strings, while $\lambda=\lceil\log n\rceil$ is sufficient, we leave this as a parameter as one might want to use longer labels in practice.

Below we give our definition of a \emph{private route discovery mechanism}. The definition is very similar to the classic definition of private information retrieval (PIR): on the one hand, is more restricted as we only consider the particular problem of retrieving shortest paths; on the other hand it is more general as the client is not just supposed to recover some database entry, but can perform a more general computation on the retrieved answers to compute the path.

\subsection{Private Route Discovery Mechanism (PRDM) Definition}

The involved parties are a content provider $\CP$, a set of servers $\SV_1,\ldots,\SV_k$ and a client $\C$.

The content provider $\CP$ gets as input a graph $G=(V,E)$ and processes it into $t$ databases $\DB_1,\ldots,\DB_k$. $\DB_i$ is then sent to $\SV_i$.

The client $\C$ on input two nodes $s,t\in V$ can now interact with the servers and the {\bf correctness property} we require
that $\C$'s final output is a shortest path from $s$ to $t$ in $G$ (assuming $\CP$ and the $\SV_i$'s follow the protocol). 

The {\bf security guarantee} we require is $(\ell,k)$-privacy: the view of any of the $\ell>0$ out of the $k$ servers should not reveal any information about the query $s,t$.

We are interested in finding {\bf efficient} schemes, that is, the scheme has to scale well with the potential growth of the Lightning network.

\section{Possible Approaches and Drawbacks}\label{sec:approaches}

We start by giving a description of several schemes that fulfill both the correctness property and the security guarantee as described above but have varying degrees of efficiency. Note that for this problem efficiency can be defined in terms of several costs, namely, the amount of storage incurred by the server or the client, the amount of computation involved on the server or the client sides in order to get the final output, and the amount of data communicated between the client and the server. The amounts of storage incurred by the server and the client are most crucial parameters and our main focus in the following.  Another important cost is the amount of computation the client incurs in order to retrieve the desired shortest path, as client-side machines are typically not as powerful as server-side machines. We start with discussing the most naive solutions, gradually augmenting them with simple strategies to improve their efficiency. To provide a summary of the efficiency of each method, we present all the relevant asymptotic costs described above for each method in Table~\ref{tab:asymptotic}.

\begin{table*}
\small
\begin{center}
\begin{tabular}{ |c|c|c|c|c|c| } 
\hline
  & Server & Communi-- & Server & Client  & Client\\
 & storage & cation & computation  & computation & storage\\
\hline
Trivial solution & $\bigO(n\lambda \Delta)$ & $\bigO(n\lambda \Delta)$ & $\bigO(1)$ & $\bigO(m+n\log n)$ & $\bigO(n\lambda \Delta)$\\ 
\hline
APSP solution & $\bigO(n^2 d \lambda)$ & $\bigO(n^2d \lambda)$ & $\bigO(n^3)$ & $\bigO(1)$ & $\bigO(n^2 d \lambda)$\\ 
\hline
APSP + PIR & $\bigO(n^2 d \lambda)$ & $\bigO(n\sqrt{d \lambda})$ & $\bigO(n^3+n^2d\lambda)$ & $\bigO(n\sqrt{d\lambda})$ & $\bigO(1)$\\ 
\hline
APSP + Hub Labelling & $\bigO(nhd\lambda)$ & $\bigO(nhd\lambda)$ & $\bigO(n^3 + n^2 \max(d,$ & $\bigO(h)$ & $\bigO(nhd\lambda)$\\
&&&$\log n)\log d)$&&\\
\hline
\textbf{Our solution}& & & \bm{$\bigO(n^3+nhd \lambda $} & &\\
\textbf{(APSP + PIR + HL)}&  $\bm{\bigO(nhd \lambda)}$ & \bm{$\bigO(\sqrt{nhd\lambda})$}  & \bm{$+ n^2 \max(d,$} & \bm{$\bigO(h + \sqrt{nhd\lambda})$} & \bm{$\bigO(hd\lambda)$} \\
&&&\bm{$\log n)\log d)$}&&\\
\hline
\end{tabular}
\end{center}
\caption{Asymptotic server and client side storage, communication, and computation cost for the trivial solution, the APSP solution, the APSP solution with a PIR, the APSP solution with Hub Labelling, and our solution which is the APSP solution with a PIR and Hub Labelling (HL). $n$ is the number of vertices in the graph, $m$ is the number of edges, $\Delta$ is the maximum degree of each vertex, $\lambda$ is the length of the binary representation of each vertex, $d$ is the length of the longest shortest path in the graph, and $h$ is the maximal hub set size.}
\label{tab:asymptotic}
\end{table*}

\subsection{Trivial Solution: Downloading Entire Graph}

A trivial $(1,1)$-private solution is to simply have the client download the entire graph and compute the shortest path locally. Although the server does not need to perform any computation for this solution, the communication from server to client is significant, i.e., the entire graph of size $\approx \lambda\cdot n\cdot \delta$, and the client has to perform a non-trivial computation (e.g.,  Dijkstra which is $\bigO(m + n\log n)$) to retrieve the desired shortest path. In addition, the client also has to store the entire graph. Taking into account the fact that in most cases clients only sporadically send small amounts of queries, and also that the underlying network needs to be frequently updated to account for new users and channels, it is useful to push the heavy computation and storage burden to the server; the latter typically runs on machines with superior hardware and storage capacity.

\subsection{Server Computes All Pairs Shortest Paths}

A natural $(1,1)$-private solution which pushes the burden of computation onto the server is for the server to precompute and store all the all pair shortest paths (APSPs). This solution gives a database with $N=n(n-1)\approx n^2$ entries (every pair of nodes) with entries of length 
$L=\lambda\cdot d$, i.e., the $\lambda$ bit labels of the $\le d$ nodes for each path. 

Naturally the amount of computation the server would have to do to compute these APSPs ($\bigO(n^3)$) increases compared to the previous trivial $(1,1)$-private solution where the server does not have to perform any computation. Nevertheless this factor is not as crucial as the stored database size ($\bigO(n^2 d \lambda)$), which grows quadratically with $n$. Even for the moderate sized Lightning network with $n\approx 7000$, this gives an almost $2TB$ database (with $d=9$ and $\lambda=20$ bit labels). 

In addition, to maintain privacy, the client cannot simply query the server for the desired index, as this blatantly reveals the path information. The client would therefore have to download the entire database from the server to hide their query. Thus, the price of a constant client-side computation cost is an additional client-side storage and communication cost of $\bigO(n^2d\lambda)$ (see Table~\ref{tab:asymptotic}). 

\subsection{Trivial PRDM from Private Information Retrieval (PIR) Using APSPs}

In a $(\ell,k)$-private information retrieval (PIR) scheme we have $k$ servers, each holding a copy of a database $\DB$ which contains $N$ entries, each $L$ bits long. A client $\C$ who want to learn $\DB[i]$ for some $i$, computes queries $q_i$ to every $\SV_i$, and then can (efficiently) compute $\DB[i]$ from the answers $a_1,\ldots,a_k$. The security property ($\ell$-privacy) requires that any union of $\ell$ servers cannot learn anything about the query $i$.

The paper introducing PIR~\cite[Corollary 4.2.1.]{ChorKGS98} has a particularly simple $(1,2)$-private PIR where the communication complexity is 
just $4\max\{L,\sqrt{N\cdot L}\}$, in particular, that's $4L$ if $L\ge N$.\footnote{In this construction we think of $\DB$ as an $L\times N$ matrix when $L\ge N$, $q_1\in\{0,1\}^N$ is random and $q_2=q_1\oplus e_i$ where $e_i$ is the zero vector with the $i$th position set to $1$. From the answers $a_i=\DB\cdot q_i$ the client computes 
$a_1\oplus a_2=\DB\cdot e_i=\DB[i]\ni \{0,1\}^L$. 
 If $L<N$ one does almost the same but thinks of $\DB$ as a 
$\sqrt{N\cdot L}\times \sqrt{N\cdot L}$ matrix.}

We can get a $(\ell,k)$-private mechanism using a $(\ell,k)$-private information retrieval on top of the APSP solution. Using a PIR protocol similar to~\cite{ChorKGS98} and assuming that the storage size of each shortest path is smaller that the total number of entries in the database, i.e. $d\cdot \lambda < n^2$, this solution reduces the communication cost between the client and servers from $\bigO(n^2d\lambda)$ to $\bigO(n\sqrt{d\lambda})$ as shown in Table~\ref{tab:asymptotic}. In addition, the client now does not need to download and store the database but just has to perform $\bigO(n\sqrt{d\lambda})$ XOR operations to get the desired shortest path so the client-side storage cost is a constant. 

From Table~\ref{tab:asymptotic}, we see that this solution incurs a larger asymptotic computation cost on both the server and client side compared to the vanilla APSP solution. Specifically, the server has to perform $\bigO(n^2 d\lambda)$ XOR operations upon each query from the client in addition to the initial $\bigO(n^3)$ computations to get the APSPs. These cost increases are not too alarming as, firstly, server-side computation costs are not as important as user-side computation and storage costs. More importantly, the length of the longest shortest path in the Lightning network $d$ is very small and unlikely to increase much as the network grows due to the centralised nature of the network (see Section~\ref{sec:topology}), and we can reasonably assume $\lambda \ll n$. 

The server-side storage requirements of this solution is unfortunately the same as the vanilla APSP solution and is the main drawback of this approach. In addition, the PIR protocol requires the servers to perform XOR computations linear in the size of the entire database in response to every query. Thus it would be highly  beneficial if the database could be kept in memory. As outlined above, for Lightning that is already not possible with this approach.

\subsection{Hub Labelling}

In a hub labelling scheme~\cite{abraham2011hub} one preprocesses a graph such that later, one can efficiently find shortest paths in-between any two nodes. For some graphs, in particular street networks, the preprocessed data is \emph{much} smaller than storing all pairwise shortest paths.

Concretely, given $G=(V,E)$ the idea is to precompute APSPs. Then compute for every vertex $v\in V$ a set of hubs $hub(v)\subset V$ such that for all $u,v\in V$, the shortest path from $u$ to $v$ (if $G$ is directed also from $v$ to $u$) contains a vertex that lies in the intersection of their hub sets $hub(u)\cap hub(v)$. This is the \emph{covering property} of the hub sets.

The preprocessed $\DB$ now contains, for every $v\in V$ its hub set $hub(v)$ and for every $u\in hub(v)$ the shortest path from $v$ to $u$ (and $u$ to $v$).

If $h=\max_{v\in V}\{|hub(v)|\}$ is the size of the largest hub, we can think of the preprocessed data $\DB$ as a database with $n=|V|$ entries of length $L=h\cdot\lambda \cdot d$ bits, i.e., for every $v\in V$ we have $hub(v)\le h$ paths, each of length $\le d$, and thus the asymptotic server-side storage drops from $\bigO(n^2d \lambda)$ to $\bigO(nhd\lambda)$. For $h \ll n$, which is the case for the Lightning network (see Section~\ref{sec:eval}), this allows us to achieve at least an order of magnitude reduction in server-side storage costs.

Compared to the vanilla APSP solution, adding hub-labelling increases the asymptotic server-side computation by an additive factor of $n^2\max(d, \log n)\log d$. The increase in computation cost is due to the hitting set computation for each radius. Nevertheless we note that the increase is not big as $d \ll n$ in the Lightning network and so the main cost is still the $\bigO(n^3)$ from the APSP computation. In addition, server-side computation is not as important as server-side storage, and these computations are also not costs incurred with every query but need only to be be done relatively infrequently, for instance every two weeks or so to take into account changes in the underlying network topology.

The main drawback of this solution is that the client needs to download the entire database to maintain the security guarantee. This would incur an asymptotic storage cost of $\bigO(nhd\lambda)$ which is problematic as client-side machines have less storage capacity compared to server-side machines.

\section{The LightPIR Approach}\label{sec:solution}

\begin{figure}[t]
    \centering
    \includegraphics[scale=0.7]{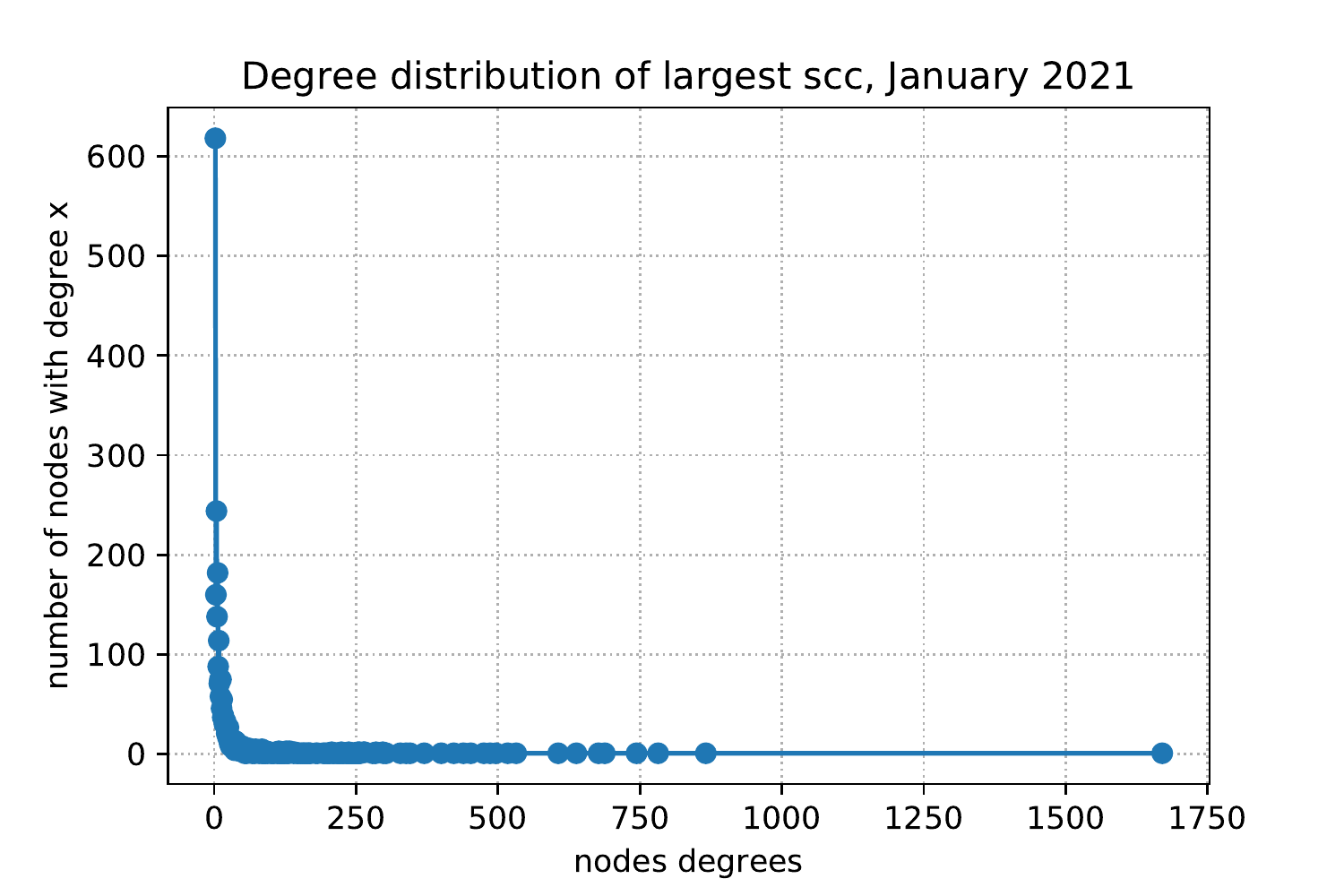}
    \caption{A plot showing the number of nodes for each degree in the largest strongly component (scc) of the Lightning network on January 2021. The distribution of degrees is quite similar in Lightning network snapshots taken in dates within the past two years.}
    \label{fig:degdist}
\end{figure}

Given the above considerations, we are now ready to
present our proposed solution, \system.

\subsection{Basic Approach}

In a nutshell, our construction of a private route discovery mechanism augments the basic APSP solution with both a PIR protocol (specifically a 2-server RAID-PIR protocol~\cite{DemmlerH2014Raidpir} which is similar in flavour to~\cite{ChorKGS98}) and hub labelling. A client who wants to learn the shortest path from $u$ to $v$ will query the PIR servers for the $u,v$ entries, which will be the hub sets of $u$ and $v$ respectively. Then, the client should perform a set intersection of the two hub sets, which will be non-empty by the covering property of these hubs, and determine the $w\in hub(u)\cap hub(v)$ for which the length of the path $u\rightarrow w$ plus 
the length of $w\rightarrow v$ is minimized. The path 
$u\rightarrow w\rightarrow v$ is the sought shortest path from $u$ to $v$.

\system \space reaps the benefit of a lower asymptotic storage cost of $\bigO(nhd\lambda)$ from hub labelling and a lower asymptotic communication cost of $\bigO(\sqrt{nhd\lambda})$ from the 2-server PIR protocol. The composition of these two techniques also lowers the per query asymptotic server-side computation due to the PIR protocol from $\bigO(n^2d\lambda)$ XOR operations to $\bigO(nhd\lambda)$. Compared to the vanilla APSP solution, the user now has to perform some computation to determine the desired shortest path. This computation is extremely simple, basically $\bigO(\sqrt{nhd\lambda})$ XOR operations to privately recover the query result from the servers and a set intersection on two moderately sized sets of at most $h \ll n$ to get the shortest path, and thus can be performed by even the weakest clients. Finally we note that although the asymptotic server and user side computation costs is higher in our solution compared to the APSP solution with hub labelling, the asymptotic storage incurred by the user drops from $\bigO(nhd\lambda)$ to $\bigO(hd\lambda)$ since the user just has to store the hub sets of the desired source and target nodes instead of all the hub sets of all nodes. Again, since $h \ll n$ in the case of the Lightning network, this is at least an order of magnitude drop in client-side storage requirements.

\subsection{Optimization of Hub Sets}

In order to lower the database storage further, \system\ relies on a heuristic which leverages the specific topology of the Lightning network to optimize the size of hub sets. We first begin with a brief description of the Lightning network topology and then describe the heuristic.

\subsubsection{Network Topology Characteristics}\label{sec:topology}
To understand and exploit the specific topological characteristics
of payment channel networks, we downloaded a number of historical snapshots of the Lightning network from the Lightning github repositories~\cite{lightninggitrepo, lngossip}. For instance, a representative network snapshot on January 2021 comprises of 6,376 nodes, 39,993 channels, and 3,650 strongly connected components, although except for the largest one, almost all of the rest components have only a very small number of nodes. We thus focus our analysis on the largest connected component of each snapshot. The one of January 2021 has 2,707 nodes and comprises a majority (31,350) of the total number of channels.

As can be seen from Figure~\ref{fig:degdist}, the majority of the nodes in the largest component of the 
Lightning network have very low degrees and about 100 nodes have degrees of $>100$. The largest component also has a relatively small diameter of 9. We can therefore infer a large degree of centralisation in the Lightning network, with a few central nodes of high degree connected to each other as well as connected to shorter chains of vertices (see Figure~\ref{fig:Hub2}). 

A channel (i.e. an edge) between two nodes in the network is weighted by its channel fee. This usually comprises of a base fee which is independent of the amount one wishes to move through the channel, and an amount-specific rate fee which depends on the amount sent across the channel. 
Table~\ref{tab:all-results} (Section \ref{sec:eval}) includes a number of network characteristics for a number of snapshots of the Lightning network over the past two years.



\begin{figure}%
    \centering
    \subfloat[\centering Figure \ref{fig:Hub2}]{{\includegraphics[width=0.45\linewidth]{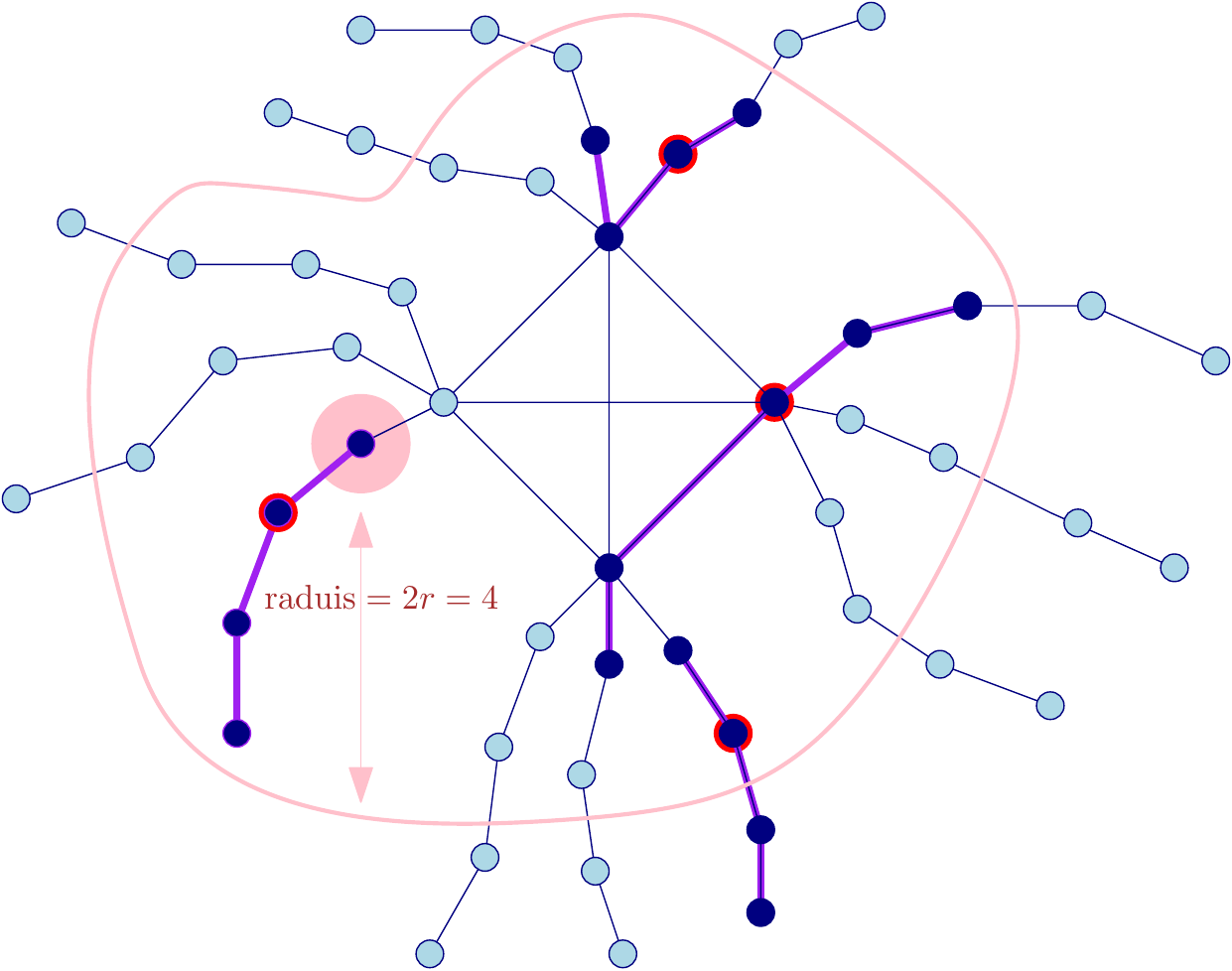}\label{fig:Hub2} }}%
    \qquad
    \subfloat[\centering Figure \ref{fig:Hub1}]{{\includegraphics[width=0.45\linewidth]{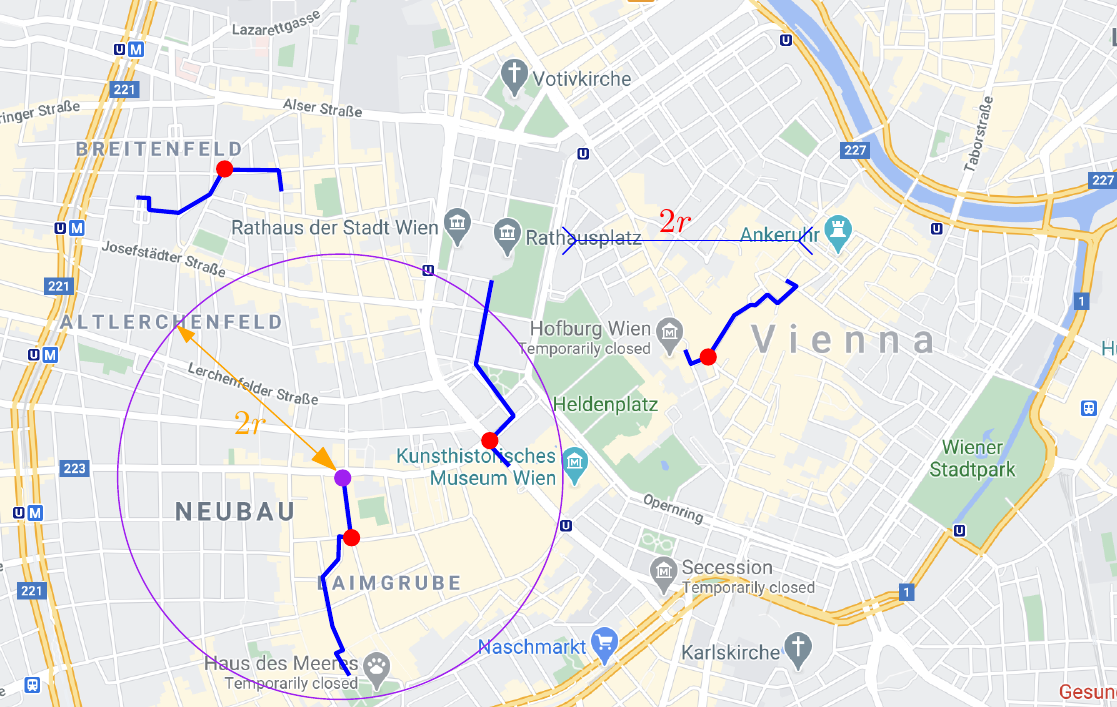}\label{fig:Hub1} }}%
    \caption{Figure \ref{fig:Hub2} shows a graph created from star graphs where the centers of the graphs form a clique. A few paths of length in $[r+1,2r]$ for $r=2$ with a (red) cover point on each path are shown. A ball (neighborhood) of radius $2r=4$ is shown, and it contains a substantial fraction of the graph, and thus the cover points.
    Figure \ref{fig:Hub1} shows a street network with 4 shortest paths of length in $(r,2r]$ (for $r\approx 500m$) and a cover point of those paths (in red) highlighted. A ball (neighborhood) with radius 2r (one with center at the starting point of a path shown in purple) will only contain cover points resulting from relatively few paths ``in the neighbourhood".
    }%
    \label{fig:hubfigs}%
\end{figure}

\subsubsection{Hub Set Computation: Greedy Heuristic}

\LinesNumbered
\begin{algorithm}[t!]
\small
\KwInput{$G = (V,E)$, the network topology and $\ell$, the number of high degree nodes to use as base}

\KwOutput{$\HDbound(G)$ (bound on $\HD(G)$) and $hubs(u),\, \forall u\in V$}

$APSP \gets allPairsShortestPaths(G)$\;

$radii \gets \{2^i \,|\, i \in \N \land 2^i \leq diameter(G) \}$\; 

$base \gets highDegreeNodes[0:\ell-1]$\;

\tcc{compute shortest path sets}
\myFor{$r \in radii$}{$shortestPaths(r) \gets \emptyset$}\tcc*{init}
\For{$path \in APSP$}{
\For{$r \in radii$}{
\If{$r<|path|\leq 2r \land base \cap path = \emptyset$}{
$shortestPaths(r).add(path)$\;
}
}
}

\tcc{compute hitting sets and covers}
\For{$r \in radii$}{$cover[r] \gets base \cup hittingSet(shortestPaths(r))$\;}

\tcc{compute $N_{2r}^{\overline{G}}(u)\cap cover[r]$ sizes \& hubs}
$\overline{G} \gets G \setminus base$\;
\myFor{$u\in V$}{$hubs(u) \gets \emptyset$}\tcc*{init $hubs$}
\For{$(u,r): u\in V \land r \in radii$}{
$hubs\_in\_neighb(u,r) \gets base \cup (N_{2r}^{\overline{G}}(u) \cap cover[r])$\label{algln:hubsinN}\;
$hubs(u) \gets hubs(u) \cup (hubs\_in\_neighb(u,r))$\;
}

$\HDbound(G) = \max_{u,r} |hubs\_in\_neighb(u,r)|$\;
\caption{\system's hub set optimization}
\label{alg:HD}
\end{algorithm}


One way of computing the hubs for each node in the network is to make use of the highway dimension (HD) algorithm~\cite{abraham2010highway, abraham2011vc}. 
Intuitively, HD is a graph metric that captures the size of a subset of nodes that intersects all shortest paths of non-trivial length. 
In order to compute the HD, we compute sets of nodes that intersect all paths of size $(r,2r]$ for a small set of radii and measure how sparse these sets are in the network. 
In the following, we explain how we use these sets to compute the hubs for each network node (Algorithm~\ref{alg:HD}).

In~\cite{abraham2010highway}, the authors give a definition of HD based on shortest-path covers and also give an approximation algorithm to compute the HD of a network using this definition. In~\cite{abraham2011vc}, the authors present a modification that improves their earlier work, which also applies to the shortest-path covers algorithm. We further modify this approximation algorithm with a heuristic from observing the topology of the Lightning network, as shown in Algorithm~\ref{alg:HD}, to both return a hub set for each node in the network, and also maintain an upper bound on the HD of the network. That is, we always overapproximate the HD but never underapproximate it. 

We present a few necessary ingredients from~\cite[Section 4]{abraham2011vc}, before elaborating on our heuristic. Let $G = (V,E)$ be the network topology.
The neighborhood $N_x(u)$ of a node $u \in V$ for a radius $x$ is the set of all nodes that appear in a path $P$, such that $v\in P$ and $|P| \leq x$, where $|P|$ denotes the length of the path $P$ ($N_x(v)$ contains the same number of nodes as the ball with center $u$ and radius $x$).
Moreover, a set $C$ of nodes is a $(k,r)$-shortest path cover ($(k,r)$-SPC) of $G$ if and only if (I) for each shortest path $P$, such that $r < |P| \leq 2r$, $P \cap C \neq \emptyset$ and (II) $\forall u\in V$, $|C \cap N_{2r}(u)| \leq k$. 
Then, if $G$ has HD $h$, then for any $r>0$ there exists an $(h,r)$-SPC of $G$.
Thus, the authors claim that by using the greedy $\bigO(\log n)$ approximation for computing hitting sets, we can combine the above into an algorithm that computes a bound of HD in $\bigO(h \log n)$, where $h = \HD(G)$ is the highway dimension of $G$.
Specifically, given a hitting set $cover[r]$ of all pairs shortest paths of length in $(r,2r]$, they make sure that condition I holds and then they can fulfill Condition II by taking all $cover[r] \cap N_{2r}(v)$ intersections for all $v \in V$ and setting $k$ to be the maximum size of those intersections.
As mentioned in~\cite{abraham2011hub}, it suffices to compute the above with the set of radii $r \in \{2^i \,|\, i \in \N \land 2^i \leq diameter(G) \}$. 
Of course, an exact hitting set algorithm would give a smaller $cover[r]$, but computing it is NP-hard.

Our heuristic is based on the key observation that, due to the centralised nature of the network, high degree nodes in the Lightning network appear in a vast majority of shortest paths, as opposed to the less dense and centralised road networks for which HD was originally designed (Figure~\ref{fig:Hub1}).
Thus many neighborhoods for center $u$ and radius $2r$ include these high degree nodes, which work as gateways to many remote nodes.
We denote by $base(\ell)$ the set of the $\ell$ nodes with the highest degree (in short, $base$).
We noticed that for network topologies similar to the one of Lightning removing the base nodes after computing the covers can significantly reduce the HD (Figure~\ref{fig:Hub2}).
Specifically, we adapt the HD computation by first computing the cover sets as in~\cite{abraham2010highway, abraham2011vc}, but then removing the base nodes and all the edges connected to the base nodes to obtain the topology $\overline{G} = G \setminus base$, in which we compute the per-node hubs as follows.
That is the hub set of node $u$ for radius $2r$ is defined by $hubs\_in\_neighb(u,r) = base \cup (N^{\overline{G}}_{2r}(u) \cap cover[r])$, where $N^{\overline{G}}_{2r}(u)$ is the neighborhood $N_{2r}(u)$ computed in $\overline{G}$.
Note that~\cite{abraham2011vc} (in addition to~\cite{abraham2010highway}) requires that the path weights are unique integers, but this is easily achieved with insignificant perturbations of edge weights (in our case we initially set the weights to be the channel base fees and then perturbed them).
We illustrate this heuristic in Algorithm~\ref{alg:HD} and show that it correctly computes an upper bound $\HDbound(G)$ of $\HD(G)$ in Lemma~\ref{lem:heur}.

\begin{lemma}
\label{lem:heur}
Algorithm~\ref{alg:HD} computes an upper bound of $\HD(G)$.
\end{lemma}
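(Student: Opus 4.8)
The plan is to route the argument through the shortest-path-cover characterization of highway dimension recalled above. The vanilla algorithm of~\cite{abraham2010highway,abraham2011vc} certifies an upper bound on $\HD(G)$ by exhibiting, for every radius $r$, a set $cover_0[r]$ meeting every shortest path of length in $(r,2r]$ and outputting $\max_{v,r}|cover_0[r]\cap N_{2r}(v)|$; that argument uses only condition~(I) for each $cover_0[r]$, so in fact $\HD(G)\le\max_{v,r}|C_r\cap N_{2r}(v)|$ for \emph{any} family $\{C_r\}$ in which each $C_r$ hits every length-$(r,2r]$ shortest path. Moreover, as noted in~\cite{abraham2011hub}, it suffices to range $r$ over $\{2^i:2^i\le diameter(G)\}$, which is exactly $radii$ in Algorithm~\ref{alg:HD}. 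Hence it is enough to prove, for each $r\in radii$, that the set $C_r:=cover[r]=base\cup hittingSet(shortestPaths(r))$ computed by the algorithm satisfies (i)~$P\cap C_r\neq\emptyset$ for every shortest path $P$ of $G$ with $r<|P|\le 2r$, and (ii)~$|C_r\cap N_{2r}(v)|\le\HDbound(G)$ for every $v\in V$.

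Part~(i) is a short case analysis on whether $P$ contains a $base$ node. If it does, then $P\cap base\neq\emptyset\subseteq C_r$. If it does not, then --- using that edge weights were perturbed so shortest paths are unique, whence $P$ is the $APSP$ entry for its endpoints --- the double loop of Algorithm~\ref{alg:HD} adds $P$ to $shortestPaths(r)$, so $hittingSet(shortestPaths(r))\subseteq C_r$ meets $P$. The same dichotomy shows the per-node sets $hubs(u)=\bigcup_{r}hubs\_in\_neighb(u,r)$ have the covering property needed by \system: a shortest $u$--$v$ path either meets $base\subseteq hubs(u)\cap hubs(v)$, or avoids $base$, hence lies entirely in $\overline G$ and is shortest there too, and choosing $r$ with $|P|\in(r,2r]$ yields $w\in P\cap cover[r]\setminus base$ with $w\in N^{\overline G}_{2r}(u)\cap N^{\overline G}_{2r}(v)$, i.e.\ $w\in hubs(u)\cap hubs(v)$.

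Part~(ii) is the crux, and the step I expect to be the main obstacle. Since the paths in $shortestPaths(r)$ avoid $base$ by construction, we may take $hittingSet(shortestPaths(r))$ disjoint from $base$, so $C_r\cap N_{2r}(v)\subseteq base\cup\big(hittingSet(shortestPaths(r))\cap N_{2r}(v)\big)$; comparing with the quantity the algorithm actually maximises, $hubs\_in\_neighb(v,r)=base\cup\big(N^{\overline G}_{2r}(v)\cap cover[r]\big)$, the missing inequality is $hittingSet(shortestPaths(r))\cap N_{2r}(v)\subseteq N^{\overline G}_{2r}(v)$. This does not hold for an arbitrary graph: deleting $base$ only shrinks neighborhoods ($N^{\overline G}_{2r}(v)\subseteq N_{2r}(v)$), so a cover point that reaches $v$ within $G$-distance $2r$ only through a $base$ node is counted on the left but not on the right. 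Closing this gap is precisely where the topological observation driving \system\ enters (Figure~\ref{fig:Hub2}): since $base$ consists of the highest-degree nodes, which lie on the vast majority of long shortest paths, $\overline G$ decomposes into small low-degree pieces and, at each scale, every cover point within $G$-distance $2r$ of a node $v$ stays within $\overline G$-distance $2r$ of $v$. I would therefore prove~(ii), and hence $\HD(G)\le\HDbound(G)$, under the explicit structural assumption that $base$ separates $G$ into components of radius $O(r)$ at each scale, flagging that without it the bound can fail --- e.g.\ a clique of high-degree hubs with long low-degree pendant chains, where the per-chain cover points pile up in a $G$-ball around a hub but disappear from every $\overline G$-ball.
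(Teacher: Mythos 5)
Your part~(i), together with the covering-property remark at the end of that paragraph, is essentially the paper's \emph{entire} proof: for an arbitrary pair $(s,d)$ and the radius $r$ with $r<|SP(s,d)|\le 2r$, the paper argues exactly your dichotomy --- either $SP(s,d)$ avoids $base$, hence lies inside $N_{2r}^{\overline{G}}(s)$ and is hit by $cover[r]$, or it meets $base$, which is contained in every hub set --- and stops there. In other words, the paper proves the covering property of the hub sets (and immediately afterwards says that this is what it takes from the lemma); it never attempts to verify condition~(II) of a $(k,r)$-SPC in $G$, nor to relate $\HDbound(G)$ to the standard highway dimension beyond this.

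Consequently, the obstacle you isolate in part~(ii) is genuine and is not resolved by the paper's proof: the algorithm only certifies $\max_{u,r}\,|base\cup(N_{2r}^{\overline{G}}(u)\cap cover[r])|$, whereas the SPC sparsity condition measures $|cover[r]\cap N_{2r}(u)|$ with neighborhoods taken in $G$, and $N_{2r}^{\overline{G}}(u)\subseteq N_{2r}(u)$ makes the certified quantity potentially smaller. Your pendant-chain example indeed shows that the literal inequality $\HD(G)\le\HDbound(G)$ cannot be derived from the algorithm's output alone: in a clique of high-degree vertices with many long low-degree chains attached to one clique vertex, \emph{any} $(k,r)$-SPC must place a cover vertex within $G$-distance $2r$ of that clique vertex for each sufficiently long chain, so the true highway dimension is at least the number of such chains, while $\HDbound(G)$ stays near $|base|$ plus a constant. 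So your proposal proves what the paper actually proves (the hitting and covering properties, handled somewhat more carefully, e.g.\ regarding uniqueness of shortest paths and membership of the common hub in \emph{both} endpoints' neighborhoods), and it correctly flags that the lemma read literally requires either an extra structural assumption of the kind you state or a reformulation (e.g.\ as a guarantee on the hub sets, which is how the paper uses it); your conditional treatment of part~(ii) is a reasonable way to close a gap the paper leaves open rather than a flaw in your argument.
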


\begin{proof}
We prove that for any two communicating nodes $s$ and $d$, there exists a hub node in the set of hubs that lies on the shortest path between $s$ and $d$, $SP(s,d)$.
Let $|SP(s,d)| = k$ and let $r$ be the radius for which $r < k \leq 2r$ (there has to be at least one such radius as our set of radii covers all shortest paths).
If $SP(s,d) \subset N_{2r}^{\overline{G}}(s)$, i.e. if $SP(s,d)$ entirely appears in the neighborhood with center $s$ and radius $2r$ in the resulting topology \textit{after} removing $base$ from $G$, then the hub for $s$ and $d$ is in $SP(s,d) \cap cover[r]$, as $cover[r]$ hits all shortest paths of length in $(r,2r]$.
Otherwise, if $SP(s,d)$ does not entirely appear in $N_{2r}^{\overline{G}}(s)$, then there exist at least one node in $base$ that intersects $SP(s,d)$. Since $base$ is included in the set of hubs (Algorithm~\ref{alg:HD}, line~\ref{algln:hubsinN}), then the proof is complete.
\end{proof}

Note that from the proof of Lemma~\ref{lem:heur}, we guarantee the covering property and hence correctness of the hub sets: the hub set of any node covers all shortest paths from that node to any other node in the network. This means that any two hub sets in a connected network must certainly have a non-empty intersection.

The time complexity of Algorithm~\ref{alg:HD} is computed as follows. The APSP step takes $\bigO(n^3)$ and the partitioning of APSP according to their length takes $\bigO(n^2\log d)$.  The computation of covers, which includes the computation of hitting sets (greedy heuristic costs $\bigO(n^2 \max(d,\log n))$), takes $\bigO((n^2 \max(d,\log n))\cdot\log d)$ and the hub computation takes $\bigO(n\log d)$.
Thus, in total $\bigO(n^3 + n^2\log d + (n^2 \max(d,\log n))\log d + n\log d) = \bigO(n^3 + n^2 \max(d,\log n)\log d)$ (cf. Table~\ref{tab:asymptotic}). 
In practice, the diameters of the Lightning network snapshots that we used to test Algorithm~\ref{alg:HD} were quite small, thus the running time was dominated by the APSP computation (which was a matter of minutes).

\section{Empirical Evaluation}\label{sec:eval}

\begin{table*}
\footnotesize
\centering
\caption{Lightning Network Snapshot Characteristics and Evaluation Results (scc stands for strongly connected component(s), $V$ for the set of nodes, $E$ for the set of channels, and $D$ for the diameter)}
\begin{tabular}{|c|c|c|c|c|c|c||c|c|c|c|c|c|}
\hline
 & & & & \multicolumn{9}{|c|}{{largest scc}} \\ \cline{5-13}
{date} & {$|V|$} & {$|E|$} & {scc} & & & & {baseline} & {heuristic} & {base} & {max}  & {hub DB} & {exec.} \\
&  &  &  & {$|V|$} &  {$|E|$} &  {$D$} & {HD} & {HD} &  {size} & {hub set} &  {size} & {time} \\ 
&  &  &  & &  &  & {bound} & {bound} &  & size &  (MB) & {(h)} \\ \hline
13-03-2019 & 3480 & 45071 & 994 & 2480 & 42025 & 7 & 386 & 209 & 100 & 271 & 1.5 & 2.03\\ \hline
13-07-2019 & 4469 & 53084 & 1804 & 2661 & 46733 & 7 & 312 & 191 & 100 & 248 & 1.4 & 1.41\\ \hline
13-09-2019 & 4733 & 51508 & 2027 & 2706 & 45774 & 7 & 389 & 228 & 100 & 304 & 2.1 & 2.64\\ \hline
13-11-2019 & 4598 & 44784 & 2158 & 2422 & 38011 & 6 & 306 & 186 & 110 & 235 & 1.2 & 1.31\\ \hline
13-03-2020 & 5070 & 48340 & 2356 & 2712 & 41571 & 6 & 368 & 212 & 100 & 270 & 1.7 & 2.67\\ \hline
13-05-2020 & 5557 & 49569 & 2570 & 2985 & 42093 & 8 & 347 & 196 & 100 & 257 & 1.6 & 1.98\\ \hline
13-07-2020 & 5888 & 51932 & 2748 & 3136 & 44005 & 7 & 361 & 211 & 100 & 286 & 1.8 & 2.27\\ \hline
13-09-2020 & 5972 & 51783 & 2800 & 3171 & 43844 & 7 & 373 & 220 & 112 & 285 & 1.8 & 2.47\\ \hline
13-11-2020 & 6074 & 48361 & 2978 & 3089 & 40695 & 8 & 419 & 253 & 103 & 315 & 2.3 & 2.96 \\ \hline
13-01-2021 & 6376 & 39993 & 3650 & 2707 & 31350 & 9 & 422 & 259 & 106 & 314 & 1.8 & 1.83\\
\hline
\end{tabular}

\label{tab:all-results}

\end{table*}

To provide a first empirical evaluation of our approach, we implemented \system\ (the code will be released together with this paper) and ran experiments on historical snapshots of the Lightning network from the Lightning network gossip repository~\cite{lngossip}. 
In general, we find that our approach indeed achieves an improvement by an order of magnitude in terms of storage requirements and a roughly 40\% reduction in the highway dimension of each snapshot, which directly translates to a reduced hubs database. 

\paragraph{Setup}
In our experiments, we used a virtual machine with 24 cores (Intel Xeon CPUs E5-2650 v4 at 2.20GHz) and
16 GB RAM, running Ubuntu 18.04.1 LTS. Our experiments were implemented in Python (version 3.7.6) using networkx~\cite{SciPyProceedings_11} among other libraries.

\paragraph{Data} The extracted snapshots range from March 2019 to the current state (January 2021) with the smallest snapshot having 3,480 nodes and the largest having 6,376 nodes. We ignore earlier snapshots as they have too few nodes. We performed our empirical evaluation on the largest strongly connected component (SCC) in the snapshot and we found that although the network size almost doubles from March 2019 to January 2021, the number of nodes in the largest SCC remains pretty consistent across all snapshots ($\approx 2500$ nodes).
The first seven columns of Table~\ref{tab:all-results} give a complete view of the datasets used.

\begin{figure}[t]
    \centering
    \includegraphics[scale=0.7]{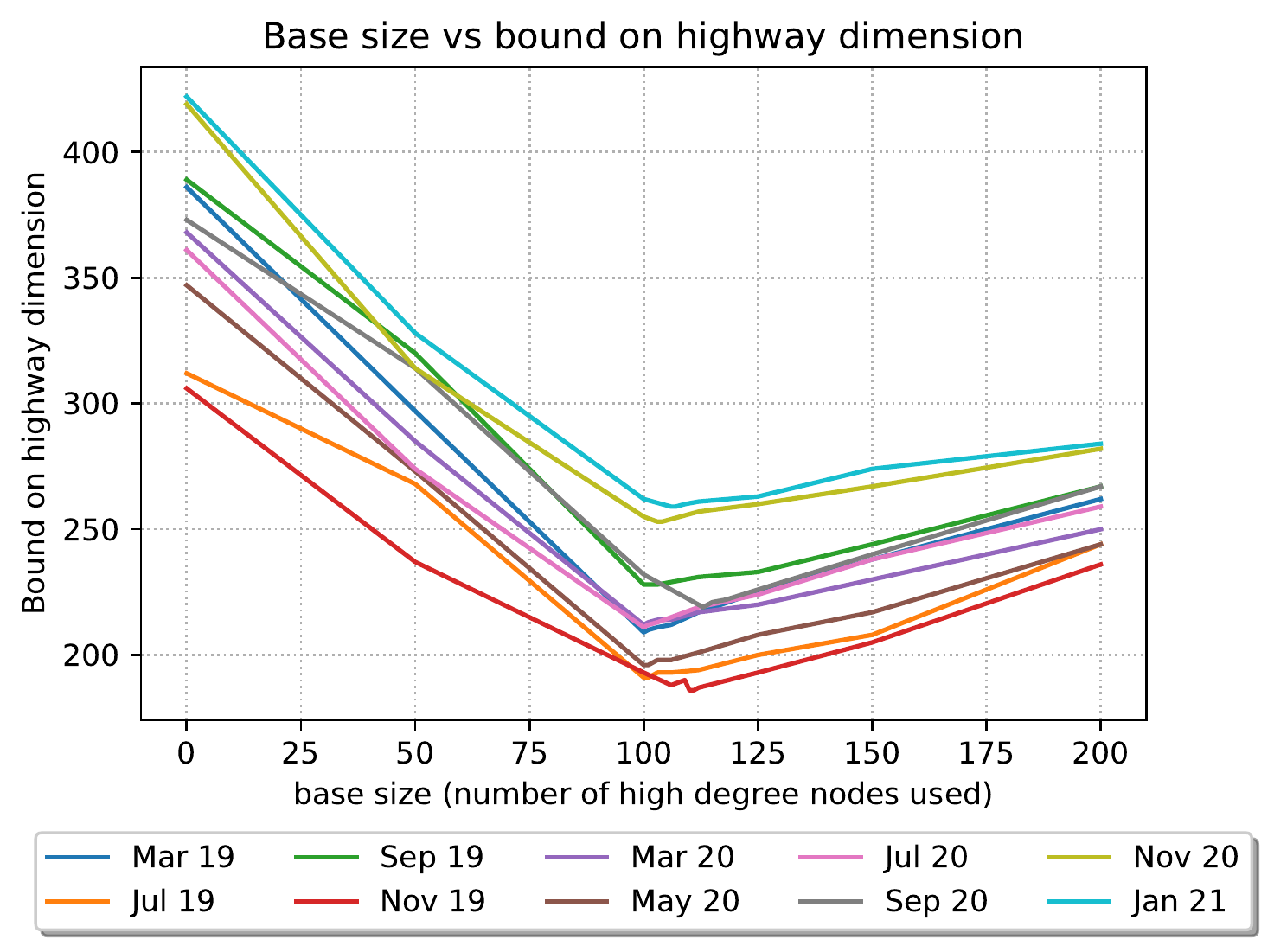}
    \caption{The x-axis shows the different base set sizes tested (a base with size $\ell$ includes the $\ell$ nodes with the highest degree) and the y-axis shows the corresponding highway dimension bound. The baseline computation corresponds to the values for base size 0, which are the highest. We run our heuristic with the base size that minimizes the highway dimension bound (thus also the hub database) and reported the results in Table~\ref{tab:all-results}.}
    \label{fig:HDvsBase}
\end{figure}

\begin{figure}[h]
    \centering
    \includegraphics[scale=0.7]{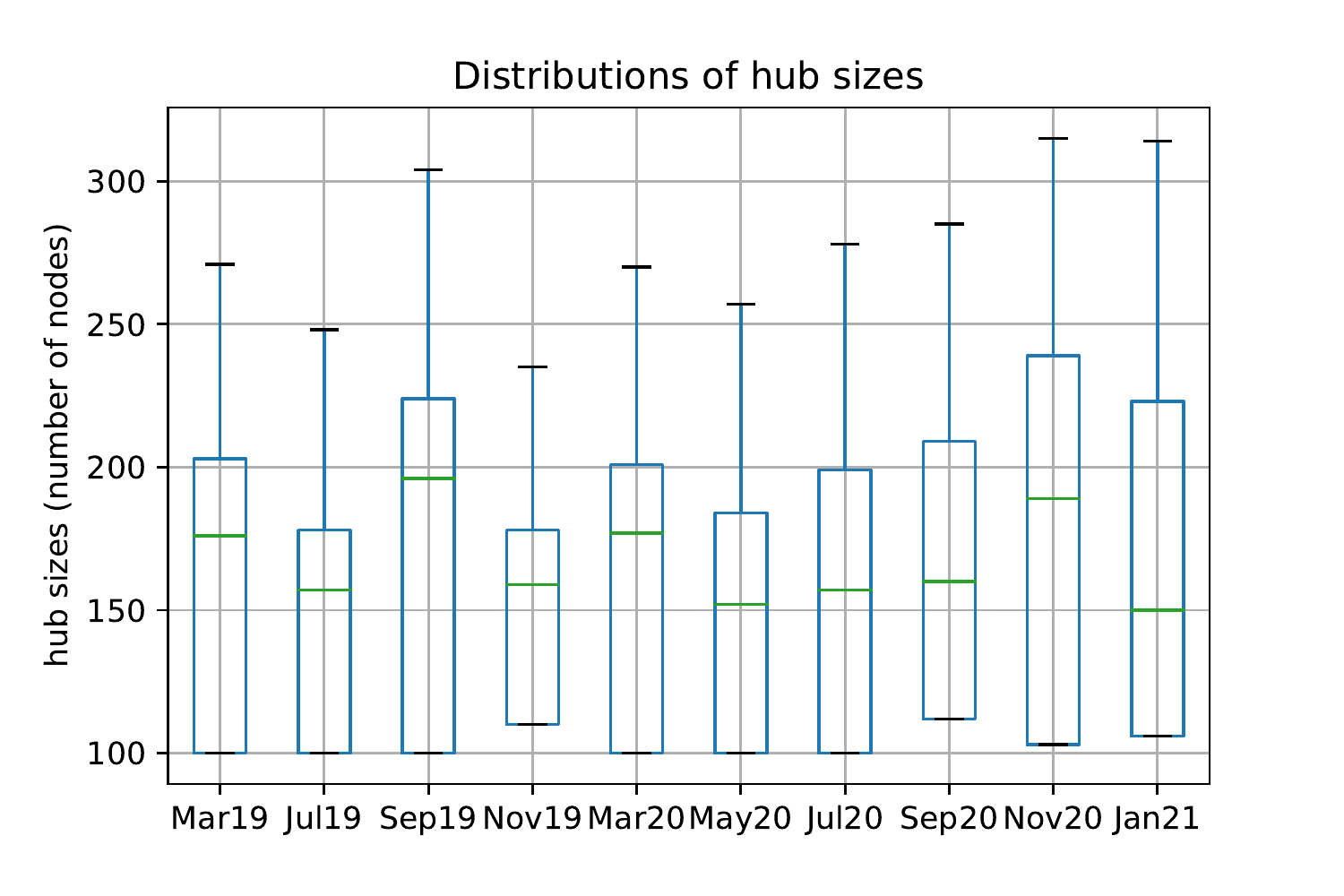}
    \caption{Box plots of hub sizes (in number of nodes) for a number of snapshots of the Lightning network. The green line shows the second quartile (median), the bottom and top of each box show the first and third quartiles, respectively, and the extended lines out of each box (whiskers) end at the min (bottom) and max (top) values of the datasets.}
    \label{fig:HD}
\end{figure}

\paragraph{Results}
We computed the optimal base size for the largest SCC in each snapshot by running a binary search to find the size of the base set that minimises the approximate highway dimension of the SCC as computed by our algorithm ($\ell$ in Algorithm~\ref{alg:HD}'s input). 
The baseline algorithm of~\cite{abraham2010highway,abraham2011vc} is the one where $\ell = 0$, i.e., there the base set is empty.
As shown in Figure~\ref{fig:HDvsBase}, for most Lightning network snapshots a base size of around $100$ nodes minimises the approximate highway dimension of the largest SCC. In fact, the reduction in highway dimension is around 40$\%$ across all snapshots when compared to the baseline (see Table~\ref{tab:all-results} baseline HD bound vs heuristic HD bound).

Across all snapshots the median hub set size as computed by our algorithm is between 150 to 200 nodes, with the maximum hub set size being 315 nodes. The distribution of hub sizes for each snapshot is shown in Figure~\ref{fig:HD}. Given that the average size for storing each snapshot is about 13MB, our method provides an overall reduction of an order of magnitude in terms of the hub set sizes (cf. second last column of Table~\ref{tab:all-results}). Due to the inclusion of the base, the minimum hub set size is at least 100. 
Specifically, as shown in Table~\ref{tab:all-results}, the average database size across all snapshots is 1.65MB.

The running time for each snapshot is about two hours (last column of Table~\ref{tab:all-results}).
This includes the search for the base size that minimizes the HD (and thus the hubs database).
Thus, it is possible to run our algorithm periodically to update the hub sets as the lightning network topology changes.

\section{Related Work and Novelty}\label{sec:relwork}

While the security of blockchains, e.g., related to the underlying consensus 
protocols~\cite{gervais2016security,wust2016ethereum}  
but also related to the interconnecting network~\cite{gervais2015tampering}, has been extensively investigated,
much less is known about the security of the recent 
concept of payment channel networks.
The security aspects of payment channel networks considered so far
mainly revolve around connectivity~\cite{rohrer2019discharged}
and the multi-hop routing of payments~\cite{aft20,malavolta2017concurrency,icissp20}.
We refer the reader to the surveys~\cite{gudgeon2020sok,csur21crypto} for an overview of specific 
threats related to hot wallets, 
mass exits, synchronizing protocols, among others.

Prior work on route discovery primarily focused on 
efficiency. For example, SpeedyMurmurs~\cite{roos2017settling}
(which extends VOUTE~\cite{roos2016voute})
and SilentWhispers~\cite{malavolta2017silentwhispers} rely on clever protocols
to speed up route discovery by
two orders of magnitude,
while maintaining the same success ratio. 
Another interesting approach is
pursued in the SpiderNetwork~\cite{sivaraman2018routing}: the payments are split
into units and the route discovery algorithm routes each of
them individually (similarly to a packet-switched network).
This method however also does not account for privacy and cost effectiveness. 
To trade off high throughput and probing overhead 
Flash~\cite{wang2019flash} distinguishes between mice and elephant payments,
splitting elephant flows across multiple flows. 
Flare~\cite{prihodko2016flare} proposes to improve scalability of route discovery
by determining beacon nodes, an approach which also motivates our model in this paper.
To overcome the cost overheads of multi-hop routing, 
thee off-chain channel network  Perun~\cite{dziembowski2019perun} 
introduces a notion of virtual payment channels;
while this  allows to avoid intermediaries for each payment, it does not solve
the discovery problem.  

Private route discovery on street networks has been explored in~\cite{Wu2016}. Their work is similar to ours in that they also use a PIR protocol to privately retrieve shortest path entries in a database. The main difference is the database compression technique: in~\cite{Wu2016} the authors develop a compression algorithm specific to the grid-like structure of street networks in North America (i.e. each node has degree 4 corresponding to the 4 cardinal directions), whereas we use hub labelling which has good compression guarantees on generic networks and we develop an additional heuristic based on the topology of the Lightning network to get small hubs set sizes. 

Our work is motivated by recent empirical work demonstrating issues with confidentiality 
during the route discovery process in the Lightning network~\cite{blockchain20,icissp20}.
To the best of our knowledge, the only work suggesting a more secure route discovery
mechanism is MAPPCN~\cite{tripathy2020mappcn},
which however does not account of the
information leaking problem addressed in this paper.

Our paper also builds upon classic works on efficient shortest path computations
on extremely large networks~\cite{Goldberg2005astar, Lauther2006path, Gutman2004reach}. Of particular interest in our context are algorithms based on transit node routing (TNR)~\cite{Bast2007tnr}, which focus on networks in which a small set of vertices covers most shortest paths, i.e., networks with small so-called highway dimension. Recently, Abraham et al.~\cite{abraham2010highway, abraham2011vc} proved theoretical guarantees on the efficiency of common shortest path search algorithms for graphs which satisfy this property. The hub-labelling algorithm~\cite{abraham2011hub, Abraham2012hhl} is a variant of TNR and it was shown to be significantly faster compared to other state of the art shortest path algorithms when tested on real world road networks~\cite{abraham2011hub}.
In this paper, we build upon this approach to scalable routing, tailoring it to the specific  topological properties of payment channel networks.

Information theoretic PIR protocols were first introduced by Chor et al.~\cite{ChorKGS98}. With 2 servers and a simple linear bit summation scheme, they achieve perfect privacy with a communication complexity of $\bigO(n^{1/3})$. Since then, follow up work in this area has mainly focused on lowering the communication complexity~\cite{Beimel2001pir, Beimel2002pirbarrier, dvir20162, Efremenko2012pir, DemmlerH2014Raidpir}. Recently, Beimel et al.~\cite{Beimel2002BreakingTO} achieved a communication complexity of $n^{\bigO(\frac{\log \log k}{k \log k})}$ with $k$ servers using locally decodable codes. Under the conjecture that there are infinitely many Mersenne primes, Yekhanin~\cite{Yekhanin2008pir} removed the dependency on large $k$ in the communication complexity bound and achieved a communication complexity of $n^{\bigO(\frac{1}{\log \log n})}$ for a 3 server PIR protocol. Although these newer protocols achieve a much lower communication complexity compared to the original protocol, the techniques used in these protocols are also much more complicated. As these benefits from a lower communication complexity are not that pertinent to us, we focus on simpler PIR protocols which are similar to~\cite{ChorKGS98} in this paper.

 PIR protocols
have already
been successfully deployed in many contexts, such as DNS~\cite{Zhao2007}, private presence service~\cite{Borisov2015DP5AP}, and private retrieval of security updates~\cite{Cappos2013secupdates}. 
However, we are not aware of any applications of PIRs 
in the context of payment channel networks.

\section{Future Work}\label{sec:conclusion}

We understand our work as a first step and would like to highlight three interesting directions of future work. 

Firstly, when shortest paths are not unique, most available algorithms for selecting a shortest path out of a set of paths of equal length just select a random one from the set. In terms of reducing the size of the hub set, one can design smarter algorithms that do not select the path randomly, but optimise for paths with a larger vertex overlap with the set of other paths already selected. This ensures that when we greedily select vertices to be in the hub set, we would select vertices that cover a larger set of paths and so could potentially decrease the size of the hub set. 

Secondly, our current heuristics to calculate the highway dimension of the network and to reduce the size of the hub sets are heavily inspired by the topology of the Lightning network. It would be interesting to track the topology of the Lightning network as the network grows in size, to see if there are significant changes and to think about how to incorporate these changes in the network topology to further optimise the hub set sizes. 

Lastly, we note that the network topology and thus the hub sets change with the amount of Bitcoins a user wants to send, e.g., due to the  capacity limitation of channels in the network (i.e. the amount of Bitcoins each channel can forward). For instance the shortest path from Alice to Bob might go through Charlie when Alice wants to send 1 Btc but the path might have to go through Dave when Alice wants to send 10 Btc because Charlie cannot forward more than 1 Btc. Additionally, the topology could also change due to the way the fees are calculated. The fees on channels comprise of a base fee and a rate fee which depends on the amount sent over the channel and thus the shortest path information could change drastically between sending small or large amounts of Bitcoin. Since we do not yet take into account fees, our current work can be seen as optimised for sending a fixed amount of Bitcoin and thus an interesting future direction would be to take into account the amount of Bitcoins a user wants to send and design a similarly efficient and private method of extracting shortest paths.

{\balance
\bibliographystyle{IEEEtran}
\bibliography{refs}
}

\end{document}